\numberwithin{equation}{section}
\numberwithin{table}{section}
\numberwithin{figure}{section}
\newcommand*\diff{\mathop{}\!\mathrm{d}}
\numberwithin{theorem}{section}
\newcommand{\TheTitle}{Monte Carlo pathwise Sensitivities for Barrier Options} 
\newcommand{\TheAuthors}{Thomas Gerstner, Bastian Harrach, Daniel Roth}
\headers{\TheTitle}{\TheAuthors}
\title{{\TheTitle}\thanks{Submitted to the editors 04/18/19.
}}
\author{
  Thomas Gerstner\thanks{Department of Mathematics, Goethe University Frankfurt, Germany (\email{gerstner@math.uni-frankfurt.de},
    \email{harrach@math.uni-frankfurt.de}, \email{roth@math.uni-frankfurt.de}).}
  \and
  Bastian Harrach\footnotemark[2]
  \and
  Daniel Roth\footnotemark[2]
}
\tikzstyle{decision} = [diamond, draw, fill=blue!20, 
\tikzstyle{block} = [rectangle, draw,  
\tikzstyle{blocks} = [rectangle, draw,  
    \tikzstyle{blockss} = [rectangle, draw,  
\tikzstyle{line} = [draw, -latex']
\tikzstyle{cloud} = [draw, ellipse,fill=red!20, node distance=3cm,
\begin{document}

\maketitle

\begin{abstract}
The Monte Carlo pathwise sensitivities approach is well established for smooth payoff functions. In this work, we present a new Monte Carlo algorithm that is able to calculate the pathwise sensitivities for discontinuous payoff functions. Our main tool is to combine the one-step survival idea of Glasserman and Staum \cite{staum} with the stable differentiation approach of Alm, Harrach, Harrach and Keller \cite{alm}. As an application we use the derived results for a five dimensional calibration of a CoCo-Bond, which we model with different types of discretely monitored barrier options, with time-dependent barrier levels. 
\end{abstract}

\begin{keywords}
  Monte Carlo, discretely monitored barrier options, pathwise sensitivities, CoCo-Bond
\end{keywords}


\section{Introduction}
Consider Monte Carlo algorithms (see, e.g., Glasserman \cite{glasserman}) for the pricing and the computation of sensitivities for different types of options with discontinuous payoff, specially discretely monitored barrier options. Depending on whether an underlying exceeds a predefined barrier, the payoff of a barrier option may be zero. For this kind of options, there are two substantial types: those which pay zero when there was no barrier crossing, so called 'knock-in' options, and those which pay zero when the barrier was crossed - the 'knock-out' options. It is obvious that barrier options are cheaper than the standard option without a barrier, since they are worthless in more circumstances. For an overview over other exotic options, particularly with discontinuous payoff, we refer to e.g.\ Zhang \cite{zhang}. Many models and algorithms assume continuous monitoring for barrier options, mainly because this leads to analytical solutions. In practice however, many barrier options traded are discretely monitored, not only since practical implementation issues, but also there are some legal and financial reasons, see e.g.\ \cite{kou}.

The price of an option is evaluated by the integral of its expected discounted payoff under a risk-neutral probability measure. For barrier options however, the payoff is discontinuous over the space of all paths. If we look at simple cases, there are analytical formulas for the option price. But if we want to use a more complex stochastic process or a high dimensional model, there won't be useful formulas. As a result of this, it is often useful to use Monte Carlo simulations, which are easy adapted to these models. However, for Monte Carlo algorithms the discontinuous payoff leads to the problem, that the option's sensitivities such as Delta and Vega can't be stably determined from the numerically calculated prices of the standard Monte Carlo algorithm, since even the smallest numerical errors in the price may have arbitrarily large effects on the sensitivities, see e.g.\ \cite{alm, koster}.

Within this work, we derive a Monte Carlo algorithm that allows to calculate the pathwise sensitivities of knock-out barrier options and additionally digital knock-in and knock-out barrier options. The main part of this paper is based on Glasserman and Staum's \cite{staum} one-step survival strategy and the results of Alm et al. \cite{alm}, of which we know that with the approach we can stably determine the option's sensitivities such as Delta and Vega by simple finite differences. The basic idea of Glasserman and Staum \cite{staum} is to use a truncated normal distribution, which excludes the values above the barrier (e.g.\ for knock-up-out options), instead of sampling from the full normal distribution. This approach avoids the discontinuity generated by any Monte Carlo path crossing the barrier, which yields to a Lipschitz-continuous payoff function \cite{alm}. Furthermore, the output allows stable numerical differentiation and leads to a variance reduction.

The new part will be to develop an extended algorithm that estimates the sensitivities for the one-step survival technique directly, without the need of simulation at multiple parameter values as in finite difference. This is an advantage, since the choice for the step-width of the finite difference varies with the input parameters to balance stability and accuracy. 

The sensitivity computation of the approach involves ideas of pathwise sensitivities, symbolic differentiation and automatic differentiation. We refer to \cite{naumann2012art,griewank2008evaluating} for recent work on automatic differentiation (AD) and to \cite{naumann2018adjoint,giles2006smoking,capriotti2010fast} for its adjoint (AAD) mode in computational finance. For several financial applications it could be very helpful to add additional information or structures to AAD tools, e.g. \cite{naumann2018adjoint} shows how an external function interface can reduce memory requirement for common numerical patterns appearing in financial codes. On the opposite, while hand coding can guarantee maximum performance (e.g. for lookback options one just needs to store the maximum or minimum, or one can store results of very expensive operations, e.g. typically {\tt exp}), it could be unfeasible across a large code base. The algorithm uses the one-step survival smoothing technique together with symbolic differentiation to apply an easy understandable and efficient AAD like method, which can be easily modified by hand.

As a final example, we want to calibrate contingent convertibles as an application of the developed theory and to illustrate another benefit of pathwise sensitivities. These are debt instruments, which convert debt into equity upon a trigger event. Contingent convertibles made their entry in the financial world in December 2009. Llodys banking group offered them, by giving their holders the possibility to swap their bonds into this new one. In early February 2011, Credit Suisse managed to attract \$ 2 bn in new capital by this new asset class.
In \cite{spiegeleer} an in-depth analysis of pricing and structuring of these CoCos is given. Spiegeleer and Schoutens show, that a CoCo-Bond can be priced by a Corporate Bond, a knock-in forward and several binary down-in options. We will compare the usage of the standard Monte Carlo, the one-step survival with numerical differentiation and the one-step survival with pathwise sensitivities for calibrating model parameters of a CoCo-Bond.

For a general overview of Monte Carlo Greek computation for all types of options we refer to \cite{giles2006smoking,glasserman,seydel2006tools,capriotti2010fast,burgos2011computation,gilesvibrato}.
There are several ways to overcome the challenges of different exotic options with non-Lipschitz payoff functions investigated in \cite{burgos2011computation}. Especially for barrier options we have to handle discontinuous path-dependent payoff functions, which can be distinguished in the continuously and the discretely monitored case. For Greeks of continuously monitored barrier options we refer to \cite{burgos}, whereas these are handled for general stochastic differential equations with the Multilevel Monte Carlo approach first introduced by Giles \cite{giles2008multilevel,giles2008improved}. This approach uses pathwise sensitivities, but it is not applicable for discretely observed barrier options. By now, the challenges of discretely observed barrier options are handled in the following two ways: payoff-smoothing combined with finite differences or the Likelihood Ratio Method. However, Alm et al. \cite{alm} determined, that the first approach is computationally more efficient. In our new approach we will combine payoff smoothing with pathwise sensitivities to preserve efficiency and to eliminate the need for finite differences.

The structure of this work is as follows. In section 2 we derive our Monte Carlo pricing algorithms for the above-mentioned barrier options and their pathwise sensitivities. Then, we study our algorithms properties and compare the results of the pathwise sensitivities with the finite difference approach in section 3. Furthermore, we will present the case study of a CoCo-Bond in this section. Section 4 contains some concluding marks. 

\section{Monte Carlo one-step survival pathwise sensitivities for discretely observed barrier options}
\label{sec:main}

In this section we will derive our Monte Carlo pathwise sensitivities algorithm for different types of discretely observed barrier options. For this, we will stick to Alm et al. \cite{alm} and use a slightly modified construction of their first algorithm.

We focus on the case, where the options depend on only one underlying asset. Furthermore, we focus on call options, but the conversion to put options is straightforward.\\

Let $S_t$ describe the evolution of the underlying spot price and let $(S_1,\dots,S_T)$ be the vector containing the evaluations at fixed, chronologically sorted, observation dates $(t_1,\dots,t_T)$. We will focus on the Black-Scholes model, where $S_t$ is assumed to follow a geometric Brownian motion
\begin{align*}
\diff S_t&=\mu S_t \diff t + \sigma S_t \diff W_t,
\end{align*}
with $\mu:=r-b$, where $r$ is the risk-free interest rate and $b$ the dividend yield. $\sigma >0$ is the volatility and $W_t$ the standard Brownian motion. 
This model yields to
\begin{align}
S_{j+1}&=S_j \exp \left( \left( \mu - \frac{\sigma^2}{2}\right) \Delta t + \sigma \sqrt{\Delta t} Z_j\right),\label{S}
\end{align}
with $j=0,\dots,T$ and $Z_j$ independent standard normally distributed, with $t_0$ and $S_0=s_0$ the current time and underlying price and step width $\Delta t$, see e.g.\ Hull \cite{hull}.

\subsection{One-step survival pathwise sensitivities for knock-up-out options}

While in this subsection we study the knock-up-out case in depth, we will present other cases less extensive in the following sections.

The payoff of a discretely observed knock-up-out barrier call option is given by
\begin{align}
V(S_1,\dots, S_T)=
   \begin{cases}
     \left(S_T-K\right)^+=:q(S_T) &  \text{if}\max\limits_{j=1,\dots,T}S_j \le B \label{payoff} \\
     0 & \text{otherwise, }  
   \end{cases}
\end{align}
with barrier value $B$, strike price $K$ and observations $j=1,\dots,T$ at the observation dates $(t_1,\dots,t_T)$. 
\begin{definition}\label{thm:PV}
The present value of an option with payoff \eqref{payoff} is given by the  discounted expected payoff
\begin{align*}
PV_{t_0}=e^{-r(t_T-t_0)}\mathbb{E}(V(S_1,\dots,S_T)),
\end{align*}
at the current time $t_0$ and at the time of the final observation $t_T$.
\end{definition}

Starting with the current underlying price $s_0=S_0$ and using \eqref{S} we can generate a path from $s_1$ to $s_T$, by sampling with independent and identically standard normal distributed random variables $Z_{j}\sim \mathcal{N}(0,1)$.
By sampling a sequence of possible realizations $(s_{1,n},\dots ,s_{T,n})$, $n=1,\dots ,N$, of the random variables $(S_1,\dots,S_T)$, we obtain an unbiased standard Monte Carlo estimator for $PV_{t_0}$, see e.g. \cite{glasserman}.

Now we will derive an alternative unbiased Monte Carlo estimator, based on the idea of one-step survival, which allows for pathwise sensitivities. The idea of Alm et al. \cite{alm} using the one-step survival technique of Glasserman and Staum \cite{staum} to obtain stable differentiability can be interpreted in different ways: forcing the path to stay below the barrier or considering an integral splitting, see e.g. \cite{alm}, whilst the latter will be the foundation of our further studies.

For the expectation of a payoff function $PV_{t_0}$, we have
\begin{align*}
PV_{t_0}(S_0)&= e^{-r\Delta t}\int_\mathbb{R}\phi (z)PV_{t_1}(S_1(z))\diff z,
\end{align*}
with the standard normal distribution $\phi$, the time increment between the first observation and the current time $\Delta t:=(t_{1}-t_0)$ and $S_1(z)$ the value of $S$ at the first observation with \eqref{S}. 
In the following work we will assume an equidistant time increment $\Delta t=(t_1-t_0)=\dots=(t_T-t_{T-1})$ between the monitoring dates, leading to $(t_T-t_0)=T \Delta t$, whereas a generalization would be straightforward.
By splitting the integral at the first observation, see e.g.\ \cite{alm}, we obtain
\begin{align}
PV_{t_0}(S_0)&=
e^{-r\Delta t}\left(0 +\int_{S_1(z)<B}\phi (z)PV_{t_1}(S_1(z))\diff z\right)\label{pvt1},
\end{align}
since the payoff will be zero for $S_1(z)\ge B$. One obtains similar formulas for latter steps and as explained in \cite{alm} the integral has to be normalized in every step to ensure a probability density. In this case this leads to:
\begin{align}
PV_{t_0}(S_0)=& e^{-r\Delta t}p_0 \int_{S_1(z)<B}\frac{\phi (z)}{p_0}PV_{t_1}(S_1(z))\diff z.\label{substitution}
\end{align}
Reaching the time of maturity the present value $PV_{t_T}$ simplifies itself to $q(S_{T})$. Note that in practice no $p_t$ will become zero. This method can be interpreted as a special case of importance sampling, see e.g.\ \cite{glasserman}.
Now, different to \cite{alm} or \cite{staum}, we will do an integral substitution to get an easier access to the pathwise sensitivities.
Rewriting the domain of integration, see e.g. \cite{alm} for a solution of $S_1(z)<B$, and using the substitution

\begin{align}
z=\Phi^{-1}(u\cdot p_0) \label{sub1}
\end{align}
we obtain an independent and constant domain of integration for the present value resulting in:
\begin{align}
PV_{t_0}(S_0)
=&  e^{-r\Delta t}p_0\int\limits_{0}^{1}PV_{t_1}(S_1(u))\diff u,\label{pvt11}
\end{align}
with a slight abuse of notation, namely with
\begin{align}
S_{1}(u)&=S_{0}\exp\left( \left( \mu - \frac{\sigma^2}{2}\right) \Delta t + \sigma \sqrt{\Delta t} z(u)\right),\label{z1}\\
z(u)&=\Phi^{-1}(p_0 u).\label{z2}
\end{align}

By iteratively splitting and substituting till the time of maturity, we obtain the following result.
\begin{theorem}\label{thm:OSSpayoff}
The present value of a knock-up-out barrier option with payoff \eqref{payoff} is given by 
\begin{align}
PV_{t_0}(S_0)= e^{-r(t_T-t_0)}\int\limits_{0}^{1}\cdots \int\limits_{0}^{1}p_0\cdots p_T \cdot q(S_T(u^{(T)},\dots,u^{(1)}))\diff u^{(T)}\cdots \diff u^{(1)}\label{payoff2}
\end{align}
with
\begin{align}
p_t&=\Phi\left( \frac{\log (B/S_t(u^{(t)}))-(\mu-\frac{\sigma^2}{2}\Delta t)}{\sigma \sqrt{\Delta t}}\right),\label{pup}\\
S_{t+1}(u^{(t+1)})&=S_{t}(u^{(t)})\exp\left( \left( \mu - \frac{\sigma^2}{2}\right) \Delta t + \sigma \sqrt{\Delta t} \Phi^{-1}(p_t u^{(t+1)})\right),\label{sup}
\end{align}
and the recursion base function $S_0(u^{(0)})=S_0$.
\end{theorem}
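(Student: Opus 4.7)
My plan is to prove Theorem \ref{thm:OSSpayoff} by induction on the number of monitoring dates $T$, using the integral splitting-plus-substitution procedure that the paper has already carried out explicitly for $T=1$ (equation \eqref{pvt11}) and sketched for $T=2$.

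For the base case $T=1$, the formula \eqref{payoff2} reduces to
\begin{align*}
PV_{t_0}(S_0) = e^{-r\Delta t} p_0 \int_0^1 q(S_1(u^{(1)}))\,\diff u^{(1)},
\end{align*}
which is exactly \eqref{pvt11} with $PV_{t_1}(S_1) = e^{-r\cdot 0} q(S_1)$ discounted only once (or rather, it matches after one takes the terminal-step formula $PV_{t_{T-1}}$ displayed just before \eqref{substitution} and feeds it into the $T=1$ version of \eqref{pvt11}). The point is that all the work needed for the base case has already been done in the paragraphs leading up to \eqref{pvt11}: split the integral against the standard normal density into $\{S_1\ge B\}$ (contributing $0$) and $\{S_1<B\}$, divide and multiply by $p_0$, rewrite the integration domain via \eqref{domain} as $(-\infty,\Phi^{-1}(p_0))$, and substitute $z=\Phi^{-1}(u p_0)$ as in \eqref{sub1}, which kills the factor $\phi(z)/p_0$ and turns the domain into $[0,1]$.

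For the inductive step, assume the formula holds with $T-1$ monitoring dates, i.e.
\begin{align*}
PV_{t_1}(S_1) = e^{-r(T-1)\Delta t}\int_0^1\!\!\cdots\!\int_0^1 p_1\cdots p_T \, q(S_T(u^{(T)},\dots,u^{(2)}))\,\diff u^{(T)}\cdots \diff u^{(2)},
\end{align*}
where the starting value of the recursion is $S_1$ instead of $S_0$. Plugging this expression for $PV_{t_1}(S_1(u^{(1)}))$ into \eqref{pvt11}, pulling the $e^{-r\Delta t}p_0$ and the outer $\int_0^1\diff u^{(1)}$ past the inner integrals (which is allowed since all integrands are measurable and bounded, the payoff $q$ being Lipschitz and the $p_t$ being in $[0,1]$), and combining the discount factors $e^{-r\Delta t}\cdot e^{-r(T-1)\Delta t}=e^{-r(t_T-t_0)}$, yields exactly the claimed identity \eqref{payoff2}, with the recursive definition of $S_{t+1}$ in \eqref{sup} matching the one used in the inductive hypothesis once $S_1(u^{(1)})$ is taken as the new starting point.

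The main technical obstacle is notational rather than analytical: one must verify that after the substitution $z=\Phi^{-1}(p_t u^{(t+1)})$, the argument of the exponential in \eqref{S} reproduces precisely \eqref{sup}, and that the factor $p_t$ given by \eqref{pup} is indeed $\int_{S_{t+1}(z)<B}\phi(z)\,\diff z$ evaluated along the current simulated path segment $S_t(u^{(t)})$, as derived in \eqref{domain}. Fubini's theorem applies at each step because the integrands are bounded on the bounded domain $[0,1]^T$, which also legitimises the reordering $\diff u^{(T)}\cdots\diff u^{(1)}$. No further regularity is needed at this stage since pathwise differentiation is discussed in the subsequent sections.
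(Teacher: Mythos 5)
Your proof is correct and follows essentially the same route as the paper: the paper's own proof simply writes out the iterated splitting-and-substitution in a single display, and your induction on $T$ (base case = the derivation culminating in \eqref{pvt11}, inductive step = inserting the $(T-1)$-date formula started at $S_1(u^{(1)})$ into \eqref{pvt11}) is just the rigorous scaffolding for that iteration. Note only that the product in \eqref{payoff2} should read $p_0\cdots p_{T-1}$ rather than $p_0\cdots p_T$, consistent with the paper's concluding display and with \cref{thm:OSSMC}; your base case already uses this corrected form.
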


\begin{proof}
Iteratively splitting and substituting following the above considerations leads to:
\begin{align*}
\begin{split}PV_{t_0}(S_0)
&=e^{-r\Delta t}p_0  \int_{S_1(z^{(1)})<B}\frac{{\phi} (z^{(1)})}{p_0}\cdot \dotsc \\
&\qquad  \dotsc\cdot  e^{-r\Delta t} p_{T-1} \int_{S_T(z^{(T)})<B}\frac{{\phi} (z^{(T)})}{p_{T-1}}q({S}_T(z^{(T)},\dots,z^{(1)}))\diff z^{(T)}\cdots\diff z^{(1)}\notag\end{split}\\
&=  e^{-r\Delta t}p_0\int\limits_{0}^{1}\cdots e^{-r\Delta t}p_{T-1}\int\limits_{0}^{1}q(S_T(u^{(T)},\dots,u^{(1)}))\diff u^{(T)}\cdots \diff u^{(1)}\\
\end{align*}
\end{proof}
Using \eqref{pup} and \eqref{sup} a Monte Carlo estimator can create the path from $s_1$ to $s_T$, by sampling with independent and identically distributed random variables $u^{(t+1)}\sim \mathcal{U}(0,1)$.
\begin{corollary}\label{thm:OSSMC}
The one-step survival Monte Carlo estimator for the present value of a knock-up-out barrier option given by the average discounted one-step survival payoff
\begin{align*}
\widehat{\text{PV}_N}=e^{-r(t_T-t_0)}\frac{1}{N}\sum\limits_{n=1}^{N} p_{0,n}\cdot \dotsc\cdot  p_{T-1,n}q(s_{T,n})
\end{align*}
is unbiased.
\end{corollary}


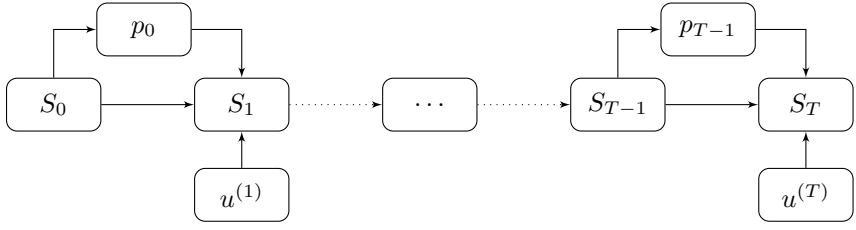
\begin{figure}
\centering
\begin{tikzpicture}[node distance = 2cm, auto]
    \node [blockss] (dritt) {$\hdots$};
    \node [blockss, left of=dritt, node distance=2.5cm] (erst) {$S_{1}$};
    \node [blockss, right of=dritt, node distance=2.5cm] (fuenft) {$S_{T-1}$};
    \node [blockss, right of=fuenft, node distance=2.5cm] (sechst) {$S_{T}$};
    \node [blockss, left of=erst, node distance=2.5cm] (vorher) {$S_{0}$};
  
    \node [blockss, below of=sechst, node distance=1.2cm] (siebt) {$u^{(T)}$};
     \node [blockss, below of=erst, node distance=1.2cm] (vorherunten) {$u^{(1)}$};
       \node [blockss, above of=erst,xshift=-1.3cm, node distance=1.0cm] (oben1) {$p_0$};
          \node [blockss, above of=fuenft,xshift=-1.3cm,xshift=2.5cm, node distance=1cm] (oben2) {$p_{T-1}$};
    \path [line] (erst) -- (dritt)[dotted];
    \path [line] (dritt) -- (fuenft)[dotted];
   \path [line] (vorher) -- (erst);
    \path [line] (vorherunten) -- (erst);
    \path [line] (siebt) -- (sechst);
    \path [line] (fuenft) -- (sechst);
    \path [line] (vorher) |- (oben1);
    \path [line] (oben1) -|(erst);
    \path [line] (fuenft) |-(oben2);
    \path [line] (oben2) -|(sechst);

\end{tikzpicture}
\caption{rescursion formula for $S$ and $p$} \label{fig:recursiontest}
\end{figure}

We derived a recursion formula, illustrated in \cref{fig:recursiontest}, for the modified asset price process and their barrier hitting probabilities.

We see that the integral domains now, as stated above, are compact and the integrand is a composition of Lipschitz-continuous functions. Therefore, the differentiation could be drawn into the integral. In exchange for this, the new asset price $S_{t+1}$ has an extra term depending on $p_{t}$ and its dependencies.

At this point we now want to study the new pathwise sensitivities, for which we will use the derived formulas of \cref{thm:OSSpayoff}. 
In the following we will use $\Theta$ as the variable of differentiation. Furthermore, we will rewrite the recursion and base functions in a more general notation for an easier study of how the derivatives of the paths can be calculated recursively. Let $\Theta:=(\Theta_1,\dots,\Theta_5)=(S_0,B,\mu,\sigma,\Delta t)$, thus \eqref{pup} and \eqref{sup} can be written as
\begin{align}
p_t(\Theta,u)&=f(s,\vartheta)_{\left| \substack{s=S_t(\Theta,u), \vartheta=\Theta}\right.}\label{pup2}\\
S_{t+1}(\Theta,u)&=g(\pi,s,\vartheta,\omega)_{\left| \substack{\pi=p_t(\Theta,u), s=S_{t}(\Theta,u),\omega=u^{(t)},\vartheta=\Theta}\right.}\label{sup2},
\end{align}
with $u=(u^{(1)},\dots,u^{(T)})$ and the recursion base function $S_0(\Theta,u)=\Theta_1$.

For \eqref{pup} and \eqref{sup} this leads to 
\begin{align}
f(s,\vartheta)&=\Phi\left( \frac{\log (\vartheta_2/s)-(\vartheta_3-\frac{\vartheta_4^2}{2}\vartheta_5)}{\vartheta_4 \sqrt{\vartheta_5}}\right),\label{f}\\
g\left(\pi,s,\vartheta,\omega\right)&=s\cdot \exp\left( \left( \vartheta_3 - \frac{\vartheta_4^2}{2}\right) \vartheta_5 + \vartheta_4 \sqrt{\vartheta_5} \Phi^{-1}\left(\pi  \omega\right)\right).\label{g}
\end{align}
Using this notation the present value of the option \eqref{payoff2} can be written as
\begin{align}
PV_{t_0}(\Theta)=  \int\limits_{0}^{1}\cdots \int\limits_{0}^{1}e^{-rT\Delta t} q^*\left(\Theta,u\right)\diff u^{(T)}\cdots \diff u^{(1)},\label{payoffoss}
\end{align}
whereas $q^*\left(\Theta,u\right)$ is the one-step survival payoff defined by
\begin{align}
q^{*}\left(\Theta,u\right):=p_0(\Theta,u)\cdot \dotsc\cdot  p_{T-1}(\Theta,u) \cdot q\left(S_T(\Theta,u)\right).\label{qoss}
\end{align}

Using this notation we can formulate the following result.

\begin{theorem}\label{thm:OSSpathwisesensitivities}
The partial derivatives of the present value of a knock-up-out barrier option with payoff \eqref{payoff} with respect to $\Theta_1,\dots,\Theta_4$ are given by 
\begin{align}
\frac{\partial PV_{t_0}}{\partial \Theta_i}(\Theta)= e^{-rT\Delta t}\int\limits_{0}^{1}\cdots \int\limits_{0}^{1}\frac{ \partial \left(q^{*}\left(\Theta,u\right)\right)}{\partial \Theta_i}\diff u^{(T)}\cdots \diff u^{(1)}\label{pathwisepayoff}
\end{align}
whereas $\frac{\partial q^*}{\partial \Theta_i}(\Theta,u)$ are the derivatives of the one-step survival payoff \eqref{qoss} given by
\begin{align}
\begin{split}\frac{\partial q^*}{\partial \Theta_i}(\Theta,u)=& \mathbbm{1}_{S_T(\Theta,u)>K}\frac{\partial S_T}{\partial \Theta_i}(\Theta,u)\cdot \prod\limits_{j=0}^{T-1}p_j(\Theta,u)\\
 & \hspace{10mm}+q(S_T(\Theta,u))\cdot \sum\limits_{j=0}^{T-1}\frac{\partial p_j}{\partial \Theta_i}(\Theta,u) \prod\limits_{k \neq j}^{T-1} p_k(\Theta,u)\end{split}.\label{pathwiseq}
\end{align}
The derivatives of $p_t(\Theta,u)$ and $S_{T}(\Theta,u)$ are recursively given by
\begin{align}
\frac{\partial p_t}{\partial \Theta_i}(\Theta,u)&=\frac{\partial f}{\partial s}(s,\vartheta)_{\left| \substack{s=S_t(\Theta,u), \vartheta=\Theta}\right.}\frac{\partial S_t}{\partial \Theta_i}(\Theta,u)+\frac{\partial f}{\partial \vartheta_i}(s,\vartheta)_{\left| \substack{s=S_t(\Theta,u), \vartheta=\Theta}\right.}\label{diffp}\\
\frac{\partial S_{t+1}}{\partial \Theta_i}(\Theta,u)&=\frac{\partial g}{\partial s}(\pi,s,\vartheta,\omega)_{\left| \substack{\pi=p_t(\Theta,u), s=S_{t}(\Theta,u),\omega=u^{(t)},\vartheta=\Theta}\right.}\frac{\partial S_t}{\partial \Theta_i}(\Theta,u)\label{diffS}\\
&+\frac{\partial g}{\partial \pi}(\pi,s,\vartheta,\omega)_{\left| \substack{\pi=p_t(\Theta,u), s=S_{t}(\Theta,u),\omega=u^{(t)},\vartheta=\Theta}\right.}\frac{\partial p_t}{\partial \Theta_i}(\Theta,u)\notag\\
&+\frac{\partial g}{\partial \vartheta_i}(\pi,s,\vartheta,\omega)_{\left| \substack{\pi=p_t(\Theta,u), s=S_{t}(\Theta,u),\omega=u^{(t)},\vartheta=\Theta}\right.},\notag
\end{align}
whereas $\frac{\partial S_t}{\partial \Theta_i}(\Theta,u)$ is the derivative of the previous recursion step, with \eqref{pup2}, \eqref{sup2}
and the initial-derivatives:
\begin{align}
\frac{\partial S_0}{\partial \Theta_1}(\Theta,u)=1, \hspace{5mm} \frac{\partial S_0}{\partial \Theta_i}(\Theta,u)=0 \hspace{3mm} \forall i>1. \label{diffbasevector}
\end{align}
\end{theorem}

\begin{proof}
\eqref{pathwisepayoff} follows through
\begin{align*}
\frac{\partial PV_{t_0}}{\partial \Theta_i}(\Theta)&=\frac{\partial \left( \int\limits_{0}^{1}\cdots \int\limits_{0}^{1}e^{-rT\Delta t}q^{*}\left(\Theta,u\right)\diff u^{(T)}\cdots \diff u^{(1)}\right)}{\partial \Theta_i}\notag\\
&= \int\limits_{0}^{1}\cdots \int\limits_{0}^{1}\frac{ \partial \left(e^{-rT\Delta t}q^{*}\left(\Theta,u\right)\right)}{\partial \Theta_i}\diff u^{(T)}\cdots \diff u^{(1)},
\end{align*}
since having compact domains and a Lipschitz-continuous integrand and since the derivatives are not with respect to $r,T$ or $\Delta t$. \eqref{pathwiseq} to \eqref{diffS} are calculated through product rule of \eqref{qoss}, \eqref{pup2} and \eqref{sup2}.
\end{proof}

\begin{figure}
\centering
\begin{tikzpicture}[node distance = 2cm, auto]
    \node [blockss] (dritt) {$\hdots$};
    \node [blockss, left of=dritt, node distance=2.5cm] (erst) {$\frac{S_{1}}{\partial \Theta_i}$};
    \node [blockss, right of=dritt, node distance=2.5cm] (fuenft) {$\frac{S_{T-1}}{\partial \Theta_i}$};
    \node [blockss, right of=fuenft, node distance=2.5cm] (sechst) {$\frac{S_{T}}{\partial \Theta_i}$};
    \node [blockss, left of=erst, node distance=2.5cm] (vorher) {$\frac{S_{0}}{\partial \Theta_i}$};
  
    \node [blockss, below of=sechst, node distance=1.2cm] (siebt) {$u^{(T)}$};
     \node [blockss, below of=erst, node distance=1.2cm] (vorherunten) {$u^{(1)}$};
       \node [blockss, above of=erst,xshift=-1.3cm, node distance=1.0cm] (oben1) {$\frac{p_0}{\partial \Theta_i}$};
          \node [blockss, above of=fuenft,xshift=-1.3cm,xshift=2.5cm, node distance=1cm] (oben2) {$\frac{p_{T-1}}{\partial \Theta_i}$};
    \path [line] (erst) -- (dritt)[dotted];
    \path [line] (dritt) -- (fuenft)[dotted];
   \path [line] (vorher) -- (erst);
    \path [line] (vorherunten) -- (erst);
    \path [line] (siebt) -- (sechst);
    \path [line] (fuenft) -- (sechst);
    \path [line] (vorher) |- (oben1);
    \path [line] (oben1) -|(erst);
    \path [line] (fuenft) |-(oben2);
    \path [line] (oben2) -|(sechst);
\end{tikzpicture}
\caption{Illustration of rescursion dependencies for $\frac{S}{\partial \Theta_i}$ and $\frac{p}{\partial \Theta_i}$.} \label{fig:recursiondiff}
\end{figure}
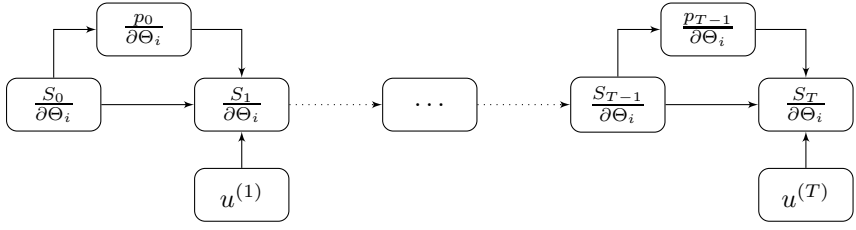

We obtain the expressions \eqref{Df} and \eqref{Dg} for $D^{(\vartheta_1,\dots,\vartheta_4,s)}(f(s,\vartheta))$ and\\ $D^{(\vartheta_1,\dots,\vartheta_4,s,\pi)}(g\left(\pi,u,s,\vartheta\right))$ for our previously introduced model, which can be viewed in the appendix.

\cref{thm:OSSpathwisesensitivities} leads to the following unbiased one-step survival pathwise sensitivities Monte Carlo estimator:

\begin{corollary}\label{thm:MCpathwise}
The one-step survival pathwise sensitivities Monte Carlo estimator with respect to $(\Theta_1,\dots, \Theta_4)$ of the present value of a knock-up-out barrier option given by the average
\begin{align*}
 \begin{split}\widehat{D_{\Theta_i}\text{PV}_N}=&e^{-r(t_T-t_0)}\frac{1}{N}\sum\limits_{n=1}^{N}\left( \mathbbm{1}_{s_{T,n}>K}\frac{\partial s_{T,n}}{\partial \Theta_i}(\Theta,u)\cdot \prod\limits_{j=0}^{T-1}p_{j,n}(\Theta,u)\right.\\
 & \hspace{10mm}\left.+q(s_{T,n}(\Theta,u))\cdot \sum\limits_{j=0}^{T-1}\left[\frac{\partial p_{j,n}}{\partial \Theta_i}(\Theta,u) \prod\limits_{k=0,k \neq j}^{T-1} p_{k,n}(\Theta,u)\right]\right)\end{split}
\end{align*}
is unbiased.
\end{corollary}

We want to remark, that if one is interested in second order Greeks the indicator function at the final step can be smoothed out by using a combination of the methods of this and the next section, forcing the path to stay between $B$ and $K$. 

\begin{corollary}\label{thm:MCpathwisesecond}
The second order one-step survival pathwise sensitivities Monte Carlo estimator with respect to $(\Theta_1,\dots, \Theta_4)$ of the present value of a knock-up-out barrier option is given by:
\begin{align*}
 \begin{split}&\widehat{D_{\Theta_i}\text{PV}_N}
 =e^{-r(t_T-t_0)}\frac{1}{N}\sum\limits_{n=1}^{N}\\
 &\left( \frac{\partial^2 s_{T,n}}{\partial \Theta_i^2}(\Theta,u)\cdot \prod\limits_{j=0}^{T-1}p_{j,n}(\Theta,u)
 +  2  \frac{\partial s_{T,n}}{\partial \Theta_i}(\Theta,u)\cdot \sum \limits_{j=0}^{T-1} \left( \frac{\partial p_{j,n}}{\partial \Theta_i}\prod\limits_{k=0, k\neq j}^{T-1}p_{k,n}(\Theta,u)  \right)  \right.\\
 & \left.+s_{T,n}(\Theta,u)\cdot \sum\limits_{j=0}^{T-1}\left[
 \frac{\partial p_{j,n}}{\partial \Theta_i} \prod\limits_{k=0}^{j-1} p_{k,n}\sum\limits_{k=j+1}^{T-1}\left(\frac{\partial p_{k,n}}{\partial \Theta_i}\prod\limits_{m=j+1,m\neq k}^{T-1} p_{m,n}\right)\right.\right.\\
&\left. \left. + \frac{\partial p_{j,n}}{\partial \Theta_i} \prod\limits_{k=j+1}^{T-1} p_{k,n}\sum\limits_{k=0}^{j-1}\left(\frac{\partial p_{k,n}}{\partial \Theta_i}\prod\limits_{m=0,m\neq k}^{j-1} p_{m,n}\right)  +\frac{\partial^2 p_{j,n}}{\partial \Theta_i^2}(\Theta,u) \prod\limits_{k=0,k \neq j}^{T-1} p_{k,n}(\Theta,u) \right]\right)\end{split}
\end{align*}
\end{corollary}

While in \eqref{Df} and \eqref{Dg} we gave the results of the needed derivatives for the pathwise sensitivities estimator of \cref{thm:MCpathwise}, these can be calculated with an AD like method by e.g.\ an easy MATLAB \cite{MATLAB:2015} script. To explain this idea in more detail, we present \cref{test1}, which uses the Symbolic Math Toolbox\texttrademark          \hspace{1mm}of MATLAB for the differentiation. The algorithm calculates the needed derivatives for \eqref{diffp} and \eqref{diffS} and inserts these into the Monte Carlo simulation, while the derivatives of the payoff are taken out of \cref{thm:MCpathwise}, respectively by hand. 

A straightforward procedure holds for second order Greeks while using similar MATLAB commands for second order differentiation and taking the derivatives of the payoff out of \cref{thm:MCpathwisesecond}. \\

\begin{algorithm}\label{algorithm1}
\caption{One-step survival pathwise sensitivities estimator with respect to $S_0,B,\mu,\sigma$  (Delta, Beta, Rho, Vega) of a knock-up-out barrier option.}\label{test1}
\begin{algorithmic}[1]
\STATE \% \textit{symbolic definition of \eqref{f} and \eqref{g}}
\STATE syms $\vartheta_1,\dots,\vartheta_5$,  $u$, $\pi$, $s$
\STATE$f=\Phi\left((\text{ln}(\vartheta_2/s-(\vartheta_3-\vartheta_4^2/2) \vartheta_5)/(\vartheta_4\sqrt{\vartheta_5})\right)$ \, \% \textit{$f(s,\vartheta)$}
\STATE $g=s\cdot \exp\left((\vartheta_3-\vartheta_4^2/2) \vartheta_5+\vartheta_4 \sqrt{\vartheta_5}\cdot \Phi^{-1}(u\cdot \pi)\right)$ \, \% \textit{$g(\pi,s,\vartheta,\omega)$}
\STATE
\STATE \% \textit{symbolic partial derivatives of $f$ and $g$, resulting in \eqref{Df} to \eqref{Dg} and the conversion of the symbolic expressions to function handles}
\STATE $f$=matlabFunction($f$)
\STATE $g$=matlabFunction($g$)
\STATE ${D_\vartheta}f$=matlabFunction(jacobian($f,[\vartheta_1,\dots,\vartheta_4]$))
\STATE ${D_s}f$=matlabFunction(jacobian($f,s$))
\STATE ${D_\vartheta}g$=matlabFunction(jacobian($g,[\vartheta_1,\dots,\vartheta_4]$))
\STATE ${D_s}g$=matlabFunction(jacobian($g,s$))
\STATE ${D_\pi}g$=matlabFunction(jacobian($g,\pi$))

\STATE
\STATE \% \textit{Monte Carlo simulation}
\STATE Initialize random seed
\STATE Initialize model parameters $\Theta=(\Theta_1,\dots,\Theta_8)=(S_0,B,\mu,\sigma,\Delta t,r,K,T)$
\FOR {$n=1,\dots,N$} 
\STATE \% \textit{base derivative recursion vector as in \eqref{diffbasevector}}
\STATE $D_\Theta{S}_0=[1,0,0,0]$ 
\FOR {$j=0:T-1$}
\STATE \% \textit{simulate paths as in \eqref{pup2} and \eqref{sup2}}
\STATE $p_j:=f(S_j,\Theta)$
\STATE Sample $u \sim U(0,1)$
\STATE $S_{j+1}:=g(p_j,S_j,\Theta,u)$
\STATE \% \textit{simulate derivatives of paths as in \eqref{diffp} and \eqref{diffS}}
\STATE $D_\Theta{p}_{j} :=D_sf(S_j,\Theta)\cdot D_\Theta{S}_j+D_\vartheta f(S_j,\Theta)$
\STATE $D_\Theta{S}_{j+1}:=D_s g(p_j,S_j,\Theta,u)\cdot D_\Theta{S}_j$
\STATE \hspace{30mm}$+D_\pi g(p_j,S_j,\Theta,u)\cdot D_\Theta{p}_j+D_\vartheta g(p_j,S_j,\Theta,u)$
\ENDFOR
\STATE \% \textit{calculate price as in \cref{thm:OSSMC}}
\STATE $P_n:= \text{prod}(p)\cdot \max(S_T-K,0)$
\STATE \% \textit{calculate the pathwise sensitivities as in \cref{thm:MCpathwise} }
\STATE $D_\Theta{P}_{n}:=\mathbbm{1}_{S_T>K}\cdot D_\Theta{S}_T \cdot\text{prod}(p) $
\FOR{$i=1,\dots,T$}
\STATE $D_\Theta{P}_{n}:=D_\Theta{P}_n+ \max(S_T-K,0) \cdot D_\Theta{p}_i \cdot \text{prod}(p)/p_i $
\ENDFOR

\ENDFOR
\STATE \textbf{return} $PV_{t_0}:=e^{-r\cdot T\Delta t}\frac{1}{N}\sum_{n=1}^N P_n$ , $D_\Theta{PV}_{t_0}=e^{-r\cdot T\Delta t}\frac{1}{N}\sum_{n=1}^N D_\Theta{P}_{n}$
\end{algorithmic}
\end{algorithm}

For a better understanding, since \cref{test1} uses the syntax and some functions of MATLAB, we will explain the code line by line in the following paragraph.

In the rows $3$ and $4$ we define the functions \eqref{f} and \eqref{g}, whereas all variables are handled in the symbolic way. To generate the symbolic expressions, we use the MATLAB function {\tt syms}.
From row $7$ to $13$ the algorithm calculates the symbolic derivatives of $f$ with respect to $\vartheta_1,\dots,\vartheta_4$ and $s$ and of $g$ with respect to $\vartheta_1,\dots,\vartheta_4,s$ and $\pi$, with the symbolic MATLAB function {\tt jacobian}. Remember that in this script these are equal to the matrices \eqref{Df} and \eqref{Dg} of the appendix.
After calculating the symbolic derivatives, the algorithm converts the symbolic expressions to function handles with {\tt matlabFunction()}. 

After this predefining of the functions, the algorithm starts with the Monte Carlo simulation at row $17$. In addition to the asset price and survival value simulation in lines $23$ to $25$, the algorithm calculates the derivatives of these in the lines $27$ and $38$ as in the formulas derived in \eqref{diffp} and \eqref{diffS}.

After the simulation of the paths the algorithm first calculates the one-step survival payoff as in \cref{thm:OSSMC} in line $35$ and the pathwise sensitivities of the option (Delta, Beta, Rho, Vega) from line $37$ to $48$ as in \cref{thm:MCpathwise}, with a slight modification emphasizing the relevance of hand coding which we want to explain in more detail: In terms of complexity \cref{thm:MCpathwise} tends to be $\mathcal{O}(NT^2)$, with $N$ Monte Carlo simulations and $T$ observations. As seen in line $36$ we can reduce the complexity to  $\mathcal{O}(NT)$ by replacing $\prod_{k=0,k \neq j}^{T-1} p_{k}$ with $(\prod_{k=0}^{T-1} p_{k})/p_{j}$, since the product can be precalculated.

We want to remark that we didn't have any problems with $p_j$ close to zero. On the one hand there are no cancellation or absorption problems with division, on the other hand if $p_j$ equals zero (very unlikely for Black-Scholes, but we could imagine it on the last step if using a numerical approximation, e.g. Euler-Maryuama, for more general models) the division wouldn't be necessary, since the path would not 'survive'. Nonetheless, alternatively $\prod_{k=0, k\neq j}p_k$ could be computed through multiplication of two pre-calculated cumulative products (increasing respectively decreasing) without increasing the order, nevertheless the pre-allocation increases the memory usage. 
Furthermore, one could use similar ideas to reduce the complexity of second order Greeks of \cref{thm:MCpathwisesecond}, i.e.\ almost $\mathcal{O}(NT^3)$, to an optimal complexity of $\mathcal{O}(NT)$. We will shortly explain some needed modifications:
$\prod_{k=0}^{j-1} p_{k}$ and $\prod_{k=j+1}^{T-1} p_{k}$ are exactly the cumulative products described for Delta. The remaining expressions can be modified through recursion formulas of the form:
\begin{align*}
 \sum\limits_{k=0}^{j}\left(\frac{\partial p_{k}}{\partial \Theta_i}\prod\limits_{m=0,m\neq k}^{j} p_{m}\right)=p_{j}\sum\limits_{k=0}^{j-1}\left(\frac{\partial p_{k}}{\partial \Theta_i}\prod\limits_{m=0,m\neq k}^{j-1} p_{m}\right)+\frac{\partial p_{j}}{\partial \Theta_i}\prod\limits_{i=0}^{j-1}p_i,
\end{align*}
i.e. starting with $j=1$, we can recursively calculate the expressions, while saving them in a vector. Finally, the vector can be summed up in order of $T$.

Finally, for this section we will give some remarks for the particular implementation in MATLAB. In the used MATLAB version the function matlabFunction() sets the input parameters in alphabetic order, which we therefore considered in the algorithm. 

To implement the algorithm, the normal distribution and the inverse normal distribution should be replaced by the formulas
\begin{align*}
\Phi(x)&=0.5\left(\text{erf}\left(\frac{x}{\sqrt{2}}\right)+1\right)\\
\Phi^{-1}(x)&=\text{inverf}(2x-1)\sqrt{2},
\end{align*}
since MATLAB is not able to differentiate symbolically the \textit{norminv} function at this time.

\subsection{One-step survival for other type of barrier options}

In this section we will shortly explain the changes that need to be made for other type of barrier options. For knock-up-out barrier options we used \eqref{sub1} for the substitution. However, for knock-down-out options defined by
\begin{align}
V(S_1,\dots,S_T)=
   \begin{cases}
     \left(S_T-K\right)^+=:q(S_T) & \text{if}\min\limits_{j=1,\dots,T}S_j \ge B  \\
     0 & \text{otherwise. }  
   \end{cases}\label{payoffdown}
\end{align}
the path survives while staying above the barrier. Thus, after splitting the integral and with a modified normalization, we obtain
\begin{align}
PV_{t_0}(S_0)=&  e^{-r\Delta t}(1-p_0)\int_{S_1(z)\ge B}\frac{\phi (z)}{(1-p_0)}PV_{t_1}(S_1(z))\diff z,\label{knockdownint}
\end{align}
at the first observation date. Now, demanding an independent and compact domain of integration, we use
\begin{align}
z=\Phi^{-1}((1-p_0)u+p_0)\label{sub2}
\end{align}

for the substitution. Now, similar to \cref{thm:OSSMC} and \cref{thm:OSSpayoff}, one could formulate similar results for knock-down-out options. At this point we just present the essential consequences for \cref{thm:OSSpathwisesensitivities}.

First we determine that instead of \eqref{qoss} we obtain the modified one-step survival payoff
\begin{align}
q^{*}\left(\Theta,u\right):=(1-p_0(\Theta,u))\cdot \dotsc\cdot (1-p_{T-1}(\Theta,u)) \cdot q\left(S_T(\Theta,u)\right),\label{downpayoff}
\end{align}
and instead of \eqref{g} we have to use
\begin{align}
g\left(\pi,u,s,\vartheta\right)&=s\cdot \exp\left( \left( \vartheta_3 - \frac{\vartheta_4^2}{2}\right) \vartheta_5 + \vartheta_4 \sqrt{\vartheta_5} \Phi^{-1}\left((1-\pi)  u+\pi\right)\right)\label{newg}
\end{align}

All in all, to obtain the one-step survival Monte Carlo estimator for the pathwise sensitivities and the payoff of a knock-down-out barrier option \cref{test1} only has to be modified at line $4$ with the new $g(\pi,s,\vartheta,\omega)$ of \eqref{newg} and at the lines $31$ to $37$ with the new payoff \eqref{downpayoff} and its derivatives, calculated straightforward by hand, as seen in \eqref{pathwiseq}.
 
The introduced techniques can't be easily applied to knock-in options, since none of the split integrals become zero.

For digital knock-in barrier options defined by:  
\begin{align}
V(S_1,\dots, S_T)=
   \begin{cases}
      c=:q(S_T) & \text{if} \max\limits_{j=1,\dots,T}S_j \ge B \\
     0 & \text{otherwise, }  
   \end{cases}\label{payoffdigital}
\end{align}
we can apply one-step survival similarly, since one part of the split integrals will be constant and doesn't need to be simulated any further.
By splitting, normalizing and substituting the integral at the first observation date, with \eqref{sub1} for the first summand and \eqref{sub2} for the second, we obtain
\begin{align*}
PV_{t_0}(S_0)
=
e^{-r\Delta t}\left((1-p_0)\cdot c + p_0\int\limits_{0}^{1}PV_{t_1}(S_1^{(1)}(u))\diff u\right),
\end{align*}
with \eqref{pup} for $p_0$ and \eqref{sup} for $S_1^{(1)}(u)$ 
and since the payoff will be $c\in \mathbb{R}$ if the underlying hits the barrier. 

All in all we see that the recursion formulas from section 2.1 hold here and a theorem for the present value and the sensitivities can be formulated analogue to \cref{thm:OSSpayoff} and \cref{thm:OSSpathwisesensitivities} while using the modified one-step survival payoff
\begin{align*}
q^*(\Theta,u)=c\left((1-p_0) +p_0  (1-p_1)  +  \dotsc  + p_0 \cdot \dotsc\cdot  p_{T-2} (1-p_{T-1})\right)
\end{align*}
and its straightforward calculated derivatives.

Finally, we remark that the theory and \cref{test1} can be adjusted straightforward for knock-down-in digital options with these results. As mentioned before, it is not straightforward to use the presented techniques for general (non-digital) knock-in barrier options. However, in practice one can use the in-out-parity
\begin{align}
\frac{\partial (V^{\text{knock-in}}(S,T))}{\partial \Theta_i}&=\frac{\partial (V^{\text{opt.}}(S,T))}{\partial \Theta_i} - \frac{\partial (V^{\text{knock-out}}(S,T))}{\partial \Theta_i},\label{inequalsout}
\end{align}
whereas for the differentiation of the non-barrier option we can use e.g.\ the standard pathwise sensitivities approach from \cite{glasserman}.
As a little remark, we want to mention that if the variable $\Theta_i$ is the barrier $B$ the differentiation of the plain option $\frac{\partial (V^{\text{opt.}}(S,T))}{\partial \Theta_i}$ will become zero.

\section{Numerical results}

In this section, we will provide some numerical results for the new introduced algorithm. Therefore, we consider a simple discretely observed up-and-out barrier option in two different settings, once with $50$ and once with $360$ observations before maturity. We will use the parameter values of Table \ref{parameter}, whereas the example is fictitious.

\begin{table}
\centering
  \begin{tabular}{lr}
  \hline
 Parameter & Value\\
 \hline
 $t_0$ & 0 \\
 $t_T$ & 1  \\
 $S_0$ & $50$  \\
 $B$ & $60$ \\
 T (\# observation dates) & $(50,360)$\\
 $r$ & 10 \%\\
 $b$ & 0 \%\\
 $\sigma$ & $20$ \%\\
 $K$ & $50$ \\
 \hline
 \end{tabular}
 \caption{Parameters for both up-and-out barrier option, only differing in the amount of observation dates. The observations are distributed equidistantly till maturity. For that reason the first observation will be at $(t_0-t_T)/T=1/50$ for the first example and respectively at $1/360$ for the second.}
  \label{parameter}
\end{table}

In the first column of \cref{fig:vergleich} we see the estimated value of the option as a function of the initial asset price at $t_0$. The results of the standard Monte Carlo estimator and the one-step survival Monte Carlo estimator are plotted in a black dashed line, and in a gray solid line, respectively. The rows use $N=10^2,N=10^3,N=10^4,$ and $N=10^5$ Monte Carlo samples from top to bottom.
For each of these and the following calculations, the same random seed was used.
The second column shows the first derivative of the option value with respect to the underlying price $S_0$ (the Delta) calculated by applying central finite differences
\begin{align*}
\frac{\widehat{PV}(S_o+\delta_s)-\widehat{\text{PV}}(S_0-\delta_s)}{\delta_s}, 
\end{align*}
with $\delta_s$ chosen as $0.5\%$ of $S_0$.
The third column shows the comparison of the one-step survival finite difference derivatives and the new one-step survival pathwise derivatives.
\begin{figure}
\centering
\includegraphics[scale=.4]{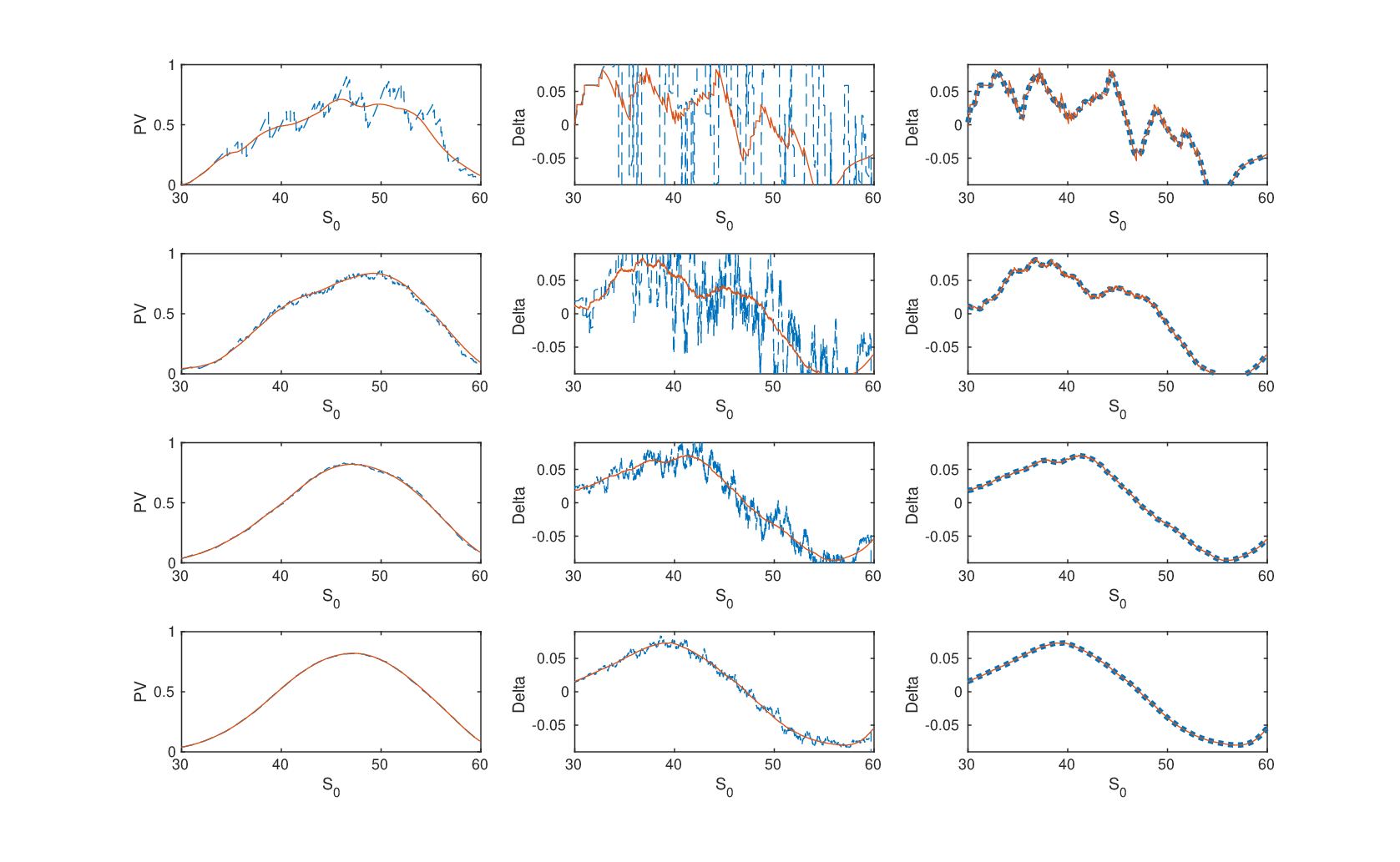}
\vspace{-7mm}
\caption{First column shows the value of the option, the case with $50$ observation dates, calculated with standard Monte Carlo (dashed line) and with the one-step survival estimator (solid line) as a function of the asset price $S_0$ at $t_0$. Second column shows the respective first derivatives by finite difference with standard Monte Carlo (dashed line) and pathwise sensitivities (solid line). The third column shows the new pathwise sensitivity approach (solid line) and again with finite differences (dotted line). In the lines the Monte Carlo samples are increased from $10^2$ to $10^5$.}
\label{fig:vergleich}
\end{figure}
The plots clearly demonstrate the instability of the standard Monte Carlo estimator with respect to numerical differentiation and the stability of the one-step survival MC estimator, as already mentioned in \cite{alm}. Furthermore, we see that the pathwise sensitivities are quite close to the finite differences. Hence, we want to take a deeper look at the comparison of these two methods for sensitivity computation, due to being more complex, as developed in \eqref{diffp} and \eqref{diffS}, the pathwise sensitivity approach doesn't need any additional path for the sensitivity computation. We present the CPU times (computed on a single core of an Intel i7-4790 CPU) of first and second order pathwise sensitivities respectively finite differences (calculated through first and second order difference quotients) for both parameter settings in \cref{cputime}. Furthermore, we compare the absolute errors for Delta and Gamma in \cref{fig:timeerror1}.

\begin{table}
 \small
\centering
  \begin{tabular}{c|ccc}
    \multicolumn{4}{c}{\textbf{computation time (CPU) in seconds}}\\
  \hline

T (\#observations) & present value & Delta: PW (FD) & Delta+Gamma: PW (FD) \\
 \hline
50 &0.2285&0.3289 (0.4571)&0.5565 (0.6857)\\
360 &1.56283&2.2407 (3.1256)&3.7560 (4.6885)\\
 \hline
 \end{tabular}
 \caption{Computation time of the one-step survival algorithm for the present value and for Delta and Gamma (including Delta) with pathwise sensitivities (PW) and central finite differences (FD) for both options computed with fixed Monte Carlo sample size ($10^5$). }
  \label{cputime}
\end{table}

\begin{figure}
\centering
\includegraphics[scale=.35]{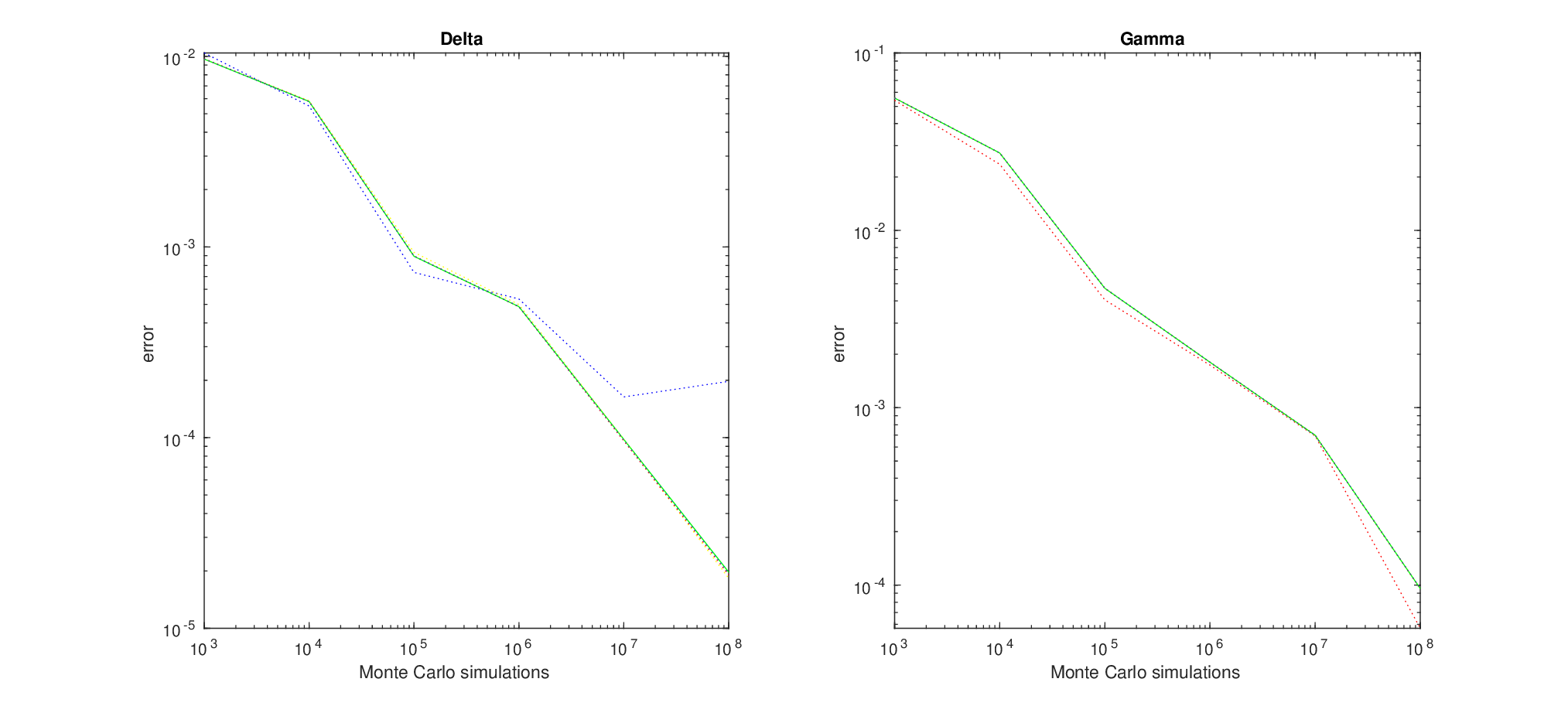}
\caption{Delta and Gamma of the pathwise sensitivity approach (solid line) vs. finite differences approaches, with $\delta_s=10^{(-0)},\delta_s=10^{(-1)},\delta_s=10^{(-2)},\delta_s=10^{(-3)}$ (dashed lines). The Figures show the absolute error of a fixed $S_0$ depending on the number of Monte Carlo simulations.}
\label{fig:timeerror1}
\end{figure}
First, we determine that the made considerations and modifications regarding the complexity with respect to the observations hold.

Next, we see, with regard to the absolute error, that the new pathwise estimator doesn't perform much better (but without discretization error), but, as can be viewed in \cref{cputime}, it needs substantially less time related to finite differences: $\approx 0.71$ ratio for Delta respectively $\approx 0.8$ ratio for Gamma.

\subsection{Calibration of a CoCo-Bond}

In this section, on the one hand, we will take a closer look at the time-saving factor for multiple Greeks and on the other hand, we will determine another benefit of pathwise sensitivities. Since being independent of discretization errors, the new approach provides a larger radius of convergence for calibrations.

CoCo-Bonds are debt instruments converting upon a certain trigger event. There are several common used options for defining this trigger-event: an accounting trigger, a multi-variate trigger or a regulatory trigger, see e.g. \cite{spiegeleer} for explanations. 

Here, for modelling CoCo-Bonds, we will use the equity derivative approach of Spiegeleer and Schoutens \cite{spiegeleer}. As the authors explain this approach does not model the true trigger-event but rather uses the approximate model that the bond is triggered whenever the underlying falls below a certain level $B$, where $B$ is to be calibrated from market data.

Let us stress that this makes calibration problems for CoCo-Bonds fundamentally different to the case of standard barrier options. For the pricing of standard barrier options, one can completely avoid calibration problems with discontinuous payoffs, since the barrier is a known pay-off feature, and other model parameters such a $\mu$ and $\sigma$ are model parameters of the underlying and can thus be calibrated through e.g. European options. However, for Coco-Bonds, using the Spiegeleer-Schoutens-model \cite{spiegeleer}, the barrier $B$ is a model parameter that can only be determined by calibrating the Coco-Bond itself. Thus, Coco-Bonds require stable calibration methods for discontinuous payoffs.

Note also that, while a one dimensional calibration of $B$ is unavoidable, the calibration of $\mu$ and $\sigma$ can be done either from standard European option market prices (for simpler calibration), or also from CoCo-Bond market prices (for higher consistency). The latter leads to multi-dimensional calibration problems for discontinuous barrier options. Also, multi-dimensional calibration seems unavoidable for multivariate underlyings or when the barrier $B$ is modelled to be time-dependent. To demonstrate the feasibility of our approach for multi-dimensional calibration problems while staying comparable with the example in \cite{spiegeleer}, we will treat the example of a univariate CoCo-Bond, while modelling five variable time-depending barrier levels.

Since we use the equity derivative approach of \cite{spiegeleer} the holder obtains coupons as long as the trigger event has not occurred. The trigger event is observed discretely at the dates of coupon payments. While the coupons will end when the trigger event is fact, the value of the CoCo-Bond is reduced compared to a plain corporate bond of the same issuer. This is valued as a short position in a binary down-and-in barrier option. We notice, that for every coupon, there is indeed a corresponding short position in binary option that is knocked in on the same barrier.

The used parameters can be viewed in \cref{table:parametersCoCo}, see \cite{spiegeleer} for further information of payoff structures. For the individual barrier levels, we modelled a convex 'smile'-pattern.

\begin{table}
\centering
  \begin{tabular}{lr}
  \hline
 Parameter & Value\\
 \hline
 $T$ & $4.5$ \\
 $r$ & 3.42 \% \\
 $\sigma$ & 10 \%\\
 $(S_0,S_{0.5},S_1,S_{1.5},S_{2})$ & $(0.6075,0.61,0.6025,0.6125,0.605)$\\
 $C_p$ & 0.5900\\
  $(B_{0.5},\dots,B_{2})$ & $0.5$ \\
   $(B_{2.5},B_{3},B_{3.5},B_{4},B_{4.5})$ & $(0.51,0.48,0.45,0.49,0.52)$ \\
 COUPON & 30\% \\
 FREQUENCY & Semi-Annual \\
 FACE VALUE & 100\\
 CONVERSATION RATIO & 100\\
 \hline
 \end{tabular}
 \caption{Used parameters for the model and the CoCo-Bond price computations starting in ascending semi-annual sequence $(0,0.5,1,1.5,2)$. The first five barrier levels are assumed to be constant.}
 \label{table:parametersCoCo}
\end{table}

We calculated the benchmark prices of the CoCo-Bond with a high number of Monte Carlo samples.

\begin{table}
 \small
\centering
  \begin{tabular}{|ccc|}
    \multicolumn{3}{c}{CPU time of a \textbf{single} iteration}\\
  \hline
 \# simulations  & OSS  & PW \\
 \hline  \vspace{-3.5mm}\\
$10^7$  & $10669.20$& $2196.17$\\
$10^6$  & $1067.34$ & $219.7$\\
$10^5$  & $106.02$ & $22.00$\\
 \hline
 \end{tabular}
 \caption{CPU time (in seconds) of one iteration step (including all derivatives) of the one-step survival estimator (OSS) using finite differences and the one-step survival pathwise sensitivity estimator (PW). }
  \label{cputime2}
\end{table}

The results are demonstrated in the following order: First, we compare the CPU computation time (\cref{cputime2}) of one-step survival using finite differences and the one-step survival pathwise sensitivities for a single calibration iteration step (i.e. present value and all derivatives). Next we will compare the calibrations using a gradient-based algorithm (\cref{table:calibrationOSSPW}), while starting with initial-vector $(0.4,0.4,0.4,0.4,0.4)$. Last, using a slightly modified initial-vector, namely $(0.3,0.3,0.3,0.3,0.3)$, we will determine an increased radius of convergence (\cref{table:calibrationOSSPWradius}), using the new pathwise sensitivity approach. 

For illustration purpose, all MATLAB optimization algorithms are used as black boxes, without any modification and without any additional data. 

Studying the results of {\tt lsqnonlin}, see \cref{table:calibrationOSSPW}, which is based on a trust-region-reflective method we see the expected weakness of standard Monte Carlo. Even for a huge number of samples the standard Monte Carlo estimator is not able to provide viable solutions, since instantly aborting. Hence, for calibrations with standard Monte Carlo estimators, one would typically resort to regularized differentiation schemes.
The calibrations, using the pathwise sensitivity approach, deliver similar residuals, while requiring similar iterations with approximately a fifth of the time (as one-step survival with finite differences, see \cref{cputime2}). Hence, we clearly see the expected strength of the pathwise sensitivities for multidimensional cases. Furthermore, we determine, see \cref{table:calibrationOSSPWradius}, that the new approach provides a larger radius of convergence, while finite differences deliver unfeasible results. 

We remark, that while experimenting with these multi-dimensional calibration, we also detected advantages of one-step survival, using gradient-free algorithms (e.g.  {\tt fminsearch}), compared to standard Monte Carlo. This effect is based on the idea that unstable differentiation also influences gradient free algorithms.

{
\begin{table}
\centering
  \begin{tabular}{|cccc|}
    \multicolumn{1}{c}{{\tt lsqnonlin:}} &   \multicolumn{2}{c}{\textbf{\textbf{standard}}}\\
  \hline
\# MC & iterations & result & resnorm\\
 \hline  \vspace{-3.5mm}\\ 
 $10^7$ & $0$ & $(0.4000,0.4000  ,  0.4000  ,  0.4000    ,0.4000) $ & $152.3039$\\
 $10^6$ & $0$  & $(0.4000,0.4000  ,  0.4000  ,  0.4000    ,0.4000) $ & $152.1053$\\
 $10^5$ & $0$  & $ (0.4000,0.4000  ,  0.4000  ,  0.4000    ,0.4000)$ & $147.0367$\\
 \hline
  &  \multicolumn{2}{c}{\textbf{\textbf{one-step survival}}}\\
  \hline 
\# MC & iterations & result & resnorm\\
 \hline \vspace{-3.5mm}\\ 
 $10^7$ & $34$ &$(0.5100,0.4766,0.4677,0.4887,0.5194)$ & $3.1451e-07$\\
 $10^6$ & $43$   &$(0.5071,0.4952,0.4761,0.4710,0.5165)$ & $ 4.3222e-05$\\
 $10^5$ & $84$ &$(0.4998,0.5097,0.4984,0.4319 ,0.4713)$ & $9.5003e-06 $\\
 \hline
  &  \multicolumn{2}{c}{\textbf{\textbf{one-step survival pathwise sensitivities}}}\\
  \hline
\# MC & iterations & result & resnorm\\
 \hline  \vspace{-3.5mm}\\ 
 $10^7$ & $34$ &$(0.5102,0.4753,0.4661,0.4883,0.5202)$ & $4.3762e-07$\\
 $10^6$ & $43$ &$(0.5071,0.4952,0.4761,0.4710,0.5165)$ & $4.3222e-05$\\
 $10^5$ & $84$ &$(0.4998,0.5096,0.4984,0.4305,0.4713)$ & $9.4110e-06$\\
 \hline
 \end{tabular}
   \caption{Results of calibrations using {\tt lsqnonlin} with initial-values ($0.4,0.4,0.4,0.4,0.4$) for the standard, one-step survival (both with finite differences) and one-step survival pathwise sensitivities Monte Carlo estimator. Depending on the number of Monte Carlo simulations (\# MC) the table shows the number of iterations taken (iterations), the returned solution (result) and the squared norm of the residual (resnorm). The true value was $(0.51, 0.48, 0.45, 0.49, 0.52)$.}
  \label{table:calibrationOSSPW}
 \end{table}

}

\begin{table}
\centering
 \begin{tabular}{|cccc|}
    \multicolumn{1}{c}{{\tt lsqnonlin:}} &   \multicolumn{2}{c}{\textbf{\textbf{one-step survival}}}\\
  \hline 
\# MC & iterations & result & resnorm\\
 \hline \vspace{-3.5mm}\\ 
 $10^7$ & $56$ &$(0.3000,0.3054,0.5270,0.3720,0.3240)$ & $4.5501
$\\
 $10^6$ & $56 $ &$ (0.3000   , 0.3072   , 0.5269,    0.3711   , 0.3194 )$ & $4.4726$\\
 $10^5$ & $59 $ &$( 0.3000  ,  0.3001   , 0.5265  ,  0.3880  ,  0.3981 )$ & $4.5162 $\\
 \hline
  &  \multicolumn{2}{c}{\textbf{\textbf{one-step survival pathwise sensitivities}}}\\
  \hline
\# MC & iterations & result & resnorm\\
 \hline  \vspace{-3.5mm}\\ 
 $10^7$ & $47$ &$(0.5101,0.4782,0.4618,0.4881,0.5204)$ & $5.4531e-07$\\
 $10^6$ & $35$ &$ (0.5071   , 0.4963  ,  0.4669  ,  0.4756   , 0.5167)$ & $4.3761e-05 $\\
 $10^5$ & $54$ &$ ( 0.4994  ,  0.5109 ,   0.4934 ,   0.4564  ,  0.4770)$ & $2.7085e-05 $\\
 \hline
 \end{tabular}
 
    \caption{Results of calibrations using {\tt lsqnonlin} with initial-values ($0.3,0.3,0.3,0.3,0.3$) for the one-step survival (with finite differences) and one-step survival pathwise sensitivities Monte Carlo estimator. Depending on the number of Monte Carlo samples (\# MC) the table shows the number of iterations taken (iterations), the returned solution (result) and the squared norm of the residual (resnorm). The true value was $(0.51, 0.48, 0.45, 0.49, 0.52)$.}
  \label{table:calibrationOSSPWradius}
 \end{table}

%
%

\section{Conclusions}
We adjusted the idea of pathwise sensitivities to the idea of the one-step survival Monte Carlo method suggested by Glasserman and Staum \cite{staum}. It followed, that we were able to calculate the pathwise sensitivities of options with discontinuous payoff, namely barrier options. In the numerical results, we saw that these derivatives behave stable and can be calculated efficiently related to finite differences, without evaluating a second respectively a third path. It followed, that there appears no problem in choosing a discretization shift size for balancing accuracy and stability, which would depend on the input parameters and the underlying Greek.

In the case study, we calibrated a CoCo-Bond, which we modelled with time-dependent barrier levels and saw that the new pathwise sensitivity estimator outperformed one-step survival (with finite differences) in terms of computation time. Furthermore, we determined a further advantage, since being without discretization error, the pathwise sensitivities improved the radius of convergence of the calibration.

The extension to non-constant parameters will work as follows: Using smaller time steps, it is assumable that the parameters are constant in these steps. Then, at the observation dates, our algorithm can be applied, with constant parameters and a smaller step width.

For a simplified presentation, we derived the algorithm within the classical Black-Scholes model. 
But, as we used a more general notation for the recursions of the paths and its derivatives, the extension to other models should be applicable as well, providing that the idea of one-step survival is feasible. The basic concepts on how to generalize one-step survival to the correlated multivariate case is given in \cite{alm} and we believe that the method of pathwise sensitivities can be applied as well. The authors divide the multivariate case into payoffs depending on the maximum or the minimum. We believe that pathwise sensitivities can be easily extended for the first case (maximum), one has to be careful with the second case (minimum), since the resulting estimator is Lipschitz-continuous but not necessarily differentiable.
We also believe, that the method can be adapted to more general models, e.g.\ models whose underlying depend upon the solution of stochastic differential equations, proceeding as follows: Modifying the approximation method, e.g.\ Euler-Maruyama or Milstein scheme, while sampling the approximation conditional on survival at specified approximation steps. Considering discretely observed barrier options this method leads to several usual performed approximation steps and one modified step at the observation date ensuring that the path survives, see e.g.\ \cite{gilesvibrato} for a similar combination idea. For continuously monitored barrier options, one could try to combine the one-step survival idea with the Brownian bridge interpolation in every step, see e.g.\ \cite{glasserman}, to smooth out the discontinuities of the barrier crossing probabilities. However, this would end in a loss of linearity, since many dependencies would be added. We also believe, that the method of pathwise sensitivities can be adapted to other kinds of options with path-dependent discontinuous payoff.

At last, it should be mentioned that other variance reduction methods, as control variates or antithetic sampling \cite{glasserman}, can be combined with the new algorithm.

\appendix
\section{Partial derivatives}
We present the partial derivatives of \eqref{pup2}, \eqref{sup2} with respect to $(\vartheta_1,\dots,\vartheta_4)$.\\
The partial derivatives of \eqref{pup2} with respect to $(\vartheta_1,\dots,\vartheta_4,s)$ are given by:
{\small
\begin{align}
&D^{(\vartheta_1,\dots,\vartheta_5,s)}(f(s,\vartheta))=\notag\\
&\left(\begin{array}{c} 0\\
 \sqrt{2}\, \mathrm{e}^{-\frac{{\left(\mathrm{log}\!\left(\frac{\vartheta_2}{s}\right) - \vartheta_5\, \left(\vartheta_3 - \frac{{\vartheta_4}^2}{2}\right)\right)}^2}{2\, {\vartheta_4}^2\, \vartheta_5}}/\left(2\, \vartheta_2\, \vartheta_4\, \sqrt{\vartheta_5}\, \sqrt{\pi}\right)\\
  -\sqrt{2}\, \sqrt{\vartheta_5}\, \mathrm{e}^{-\frac{{\left(\mathrm{log}\!\left(\frac{\vartheta_2}{s}\right) - \vartheta_5\, \left(\vartheta_3 - \frac{{\vartheta_4}^2}{2}\right)\right)}^2}{2\, {\vartheta_4}^2\, \vartheta_5}}/\left(2\, \vartheta_4\, \sqrt{\pi}\right)\\
    -\mathrm{e}^{-\frac{{\left(\mathrm{log}\!\left(\frac{\vartheta_2}{s}\right) - \vartheta_5\, \left(\vartheta_3 - \frac{{\vartheta_4}^2}{2}\right)\right)}^2}{2\, {\vartheta_4}^2\, \vartheta_5}}\, \left(\frac{\sqrt{2}\, \left(\vartheta_3 - \frac{{\vartheta_4}^2}{2}\right)}{\sqrt{\pi}2\, \vartheta_4\, \sqrt{\vartheta_5}} + \frac{\sqrt{2}\, \left(\mathrm{log}\!\left(\frac{\vartheta_2}{s}\right) - \vartheta_5\, \left(\vartheta_3 - \frac{{\vartheta_4}^2}{2}\right)\right)}{\sqrt{\pi}4\, \vartheta_4\, {\vartheta_5}^{\frac{3}{2}}}\right)\\
     -\sqrt{2}\, \mathrm{e}^{-\frac{{\left(\mathrm{log}\!\left(\frac{\vartheta_2}{s}\right) - \vartheta_5\, \left(\vartheta_3 - \frac{{\vartheta_4}^2}{2}\right)\right)}^2}{2\, {\vartheta_4}^2\, \vartheta_5}}/\left(2\, s\, \vartheta_4\, \sqrt{\vartheta_5}\, \sqrt{\pi})\right) \end{array}\right).
\label{Df}
\end{align}}
The partial derivatives of \eqref{sup2} with respect to $(\vartheta_1,\dots,\vartheta_4,s,\pi)$ are given by:
{\small
\begin{align}
D^{(\vartheta_1,\dots,\vartheta_5,s,\pi)}&(g\left(\pi,u,s,\vartheta\right))=\notag\\
&\left(\begin{array}{c} 0\\
 0\\
  s\, \vartheta_5\, \mathrm{e}^{\vartheta_5\, \left(\vartheta_3 - \frac{{\vartheta_4}^2}{2}\right) +  \vartheta_4\, \sqrt{\vartheta_5}\, \Phi^{-1}\!\left( \mathrm{\pi}\, u \right)}\\
    s\, \mathrm{e}^{\vartheta_5\, \left(\vartheta_3 - \frac{{\vartheta_4}^2}{2}\right) +  \vartheta_4\, \sqrt{\vartheta_5}\, \Phi^{-1}\!\left(\, \mathrm{\pi}\, u\right)}\, \left(\vartheta_3 - \frac{{\vartheta_4}^2}{2} + \frac{\, \vartheta_4\, \Phi^{-1}\!\left(\, \mathrm{\pi}\, u\right)}{2\, \sqrt{\vartheta_5}}\right)\\
     \sqrt{2}\, s\, u\, \vartheta_4\, \sqrt{\vartheta_5}\, \sqrt{\pi}\, \mathrm{e}^{\vartheta_5\, \left(\vartheta_3 - \frac{{\vartheta_4}^2}{2}\right) + \, \vartheta_4\, \sqrt{\vartheta_5}\, \Phi^{-1}\!\left(\, \mathrm{\pi}\, u \right)}\, \mathrm{e}^{{\Phi^{-1}\!\left( \mathrm{\pi}\, u \right)}^2/\sqrt{2}}\\ 
     \mathrm{e}^{\vartheta_5\, \left(\vartheta_3 - \frac{{\vartheta_4}^2}{2}\right) +  \vartheta_4\, \sqrt{\vartheta_5}\, \Phi^{-1}\!\left( \mathrm{\pi}\, u \right)} \end{array}\right).
\label{Dg}
\end{align}}

\bibliographystyle{siamplain}
\bibliography{references}

\end{document}